\newtheorem{theorem}{Theorem}{}
\newtheorem{assumption}{Assumption}{}
\newtheorem{lemma}{Lemma}{}
\newenvironment{proof}{{\noindent\it Proof.}\quad}{\hfill $\square$\par}
{}
\title{Distributed Containment Reference Signal for Nonholonomic Planar Vehicles}
\author{ \href{https://https://orcid.org/0000-0002-0174-4795}{\includegraphics[scale=0.06]{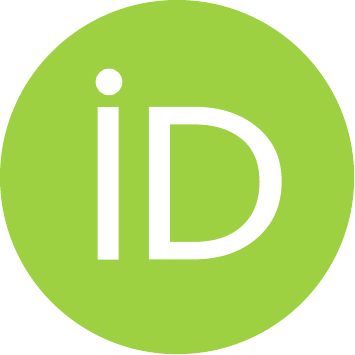}\hspace{1mm}Lixia Yan} \\
	The Seventh Research Division, School of Automation Science and Electrical Engineering\\
	Beihang University\\
	Beijing 100191, People's Republic of China\\
	\texttt{yanlixia@buaa.edu.cn} \\
}
\begin{document}
\maketitle

\begin{abstract}
Cooperative of multiple nonholonomic vehicles can be converted into tracking problems of a single-vehicle. The reference trajectory design within distributed features for each vehicle in the group is addressed in this note. The motivation is that nonholonomic vehicles cannot achieve asymptotical stabilization of non-feasible reference signals, and modifications about the virtual reference trajectory design are needed. Reduced-order design and time-varying technique, and some simple geometry tricks are applied to derive the dynamic reference trajectory.
\end{abstract}

\keywords{Distributed Containment \and Nonlinear Observer \and Reduced-order Design}
\section{Preliminaries}
\subsection{Notations and definitions}
Throughout this paper, $\mathbb{R}$ denotes the set of real numbers, $\|\cdot\|$ represents the Euclidean norm, $|\cdot|$ is the absolute value of a scalar, diag$\{\cdot\}$ denotes the diagonal matrix formed by a vector, $\textit{\textbf{I}}_n$ is an $n$-dimensional identity matrix, $1_n$ is an $n$-dimensional identity vector, $0_n$ denotes an $n$-dimensional zero vector. For a given square matrix, $\lambda(\cdot)$, $\lambda_{m}(\cdot)$ and $\lambda_{M}(\cdot)$ represent the eigenvalue, the smallest and largest eigenvalue, respectively.

\subsection{Graph Theory}
Using graph theory to model the interactions among followers and leaders \citep{RN226} , let $\mathcal{G}=({\mathcal{N}},\mathcal{E},\mathcal{A})$ be the undirected graph with node set $\mathcal{N}=\mathcal{F}\bigcup\mathcal{R}=\{1,...,n,n+1,...,n+m\}$, edge set $\mathcal{E}\subseteq \mathcal{N} \times \mathcal{N}$ and adjacency matrix $\mathcal{A}=\{a_{ij}\}\in \mathbb{R}^{n\times n}$. A direct edge $\{(j,i):i\neq j\}\in \mathcal{E}$ means that node $i$ has access to node $j$. The entry $a_{ij}$ is constant weight and defined as:  $a_{ij}=1$, if $(j,i)\in \mathcal{E}$; $a_{ij}=0$, otherwise. Self interaction is not allowed, i.e., $a_{ii}=0$. The graph $\mathcal{G}$ is called undirected if matrix $\mathcal{A}$ has symmetric weights, i.e., $a_{ij}=a_{ji}, \forall i,j \in \mathcal{N}$. A path from node $i$ to node $j$ denotes an edge sequence $\{(i,j_{1}),(j_1,j_2)...,(j_*,j)\}$, where $i,j_1,j_2,...,j_*,j\in \mathcal{N}$. The Laplacian matrix $\mathcal{L}=\{\mathfrak{{l}}_{ij}\}\in\mathbb{R}^{(n+m)\times (n+m)}$ is denoted as:$\mathfrak{l}_{ij}= - {a_{ij}},~ \text{if}~~i\neq j;\mathfrak{l}_{ij}=\sum\limits_{j = 1}^{n+m} {a_{ij}},~\text{if}~~i = j.$

Partition Laplacian matrix $\mathcal{L}$ into
\begin{equation}\label{eqls1}
\mathcal{L}{\rm{ = }}\left[ {\begin{array}{*{20}{c}}
{{\mathcal{L}_{\rm{1}}}}&{{\mathcal{L}_{\rm{2}}}}\\
{\mathbf{{0}}_{m\times n}}&{\mathbf{{0}}_{m\times m}}
\end{array}} \right]
\end{equation}
where $\mathcal{L}_1\in\mathbb{R}^{n\times n}$ and $\mathcal{L}_2\in\mathbb{R}^{n\times m}$. The interaction graph  $\mathcal{G}$ is supposed to satisfy,
\begin{assumption} \label{A3}
The communication among follower hovercrafts are fixed, undirected and connected. For each individual follower, there exists one path from it to one leader at least.
\end{assumption}
\begin{lemma}\citep{RN1095}
Under Assumption \ref{A3}, the matrix $\mathcal{L}_1$ is positive definite, every entry of $-\mathcal{L}_1^{-1}\mathcal{L}_2$ is non-negative, and the sum of entries of each row of $-\mathcal{L}_1^{-1}\mathcal{L}_2$ equals to 1.
\end{lemma}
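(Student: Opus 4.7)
The plan is to exploit the classical correspondence between grounded Laplacians and nonsingular M-matrices. Concretely, I would observe that $\mathcal{L}_1$ is symmetric (the graph is undirected), has non-positive off-diagonal entries by definition of the Laplacian, and is weakly diagonally dominant because each row of the full Laplacian sums to zero. For any follower $i$, the $i$-th row sum of $\mathcal{L}_1$ equals $\sum_{k=n+1}^{n+m}a_{ik}$, i.e., the total weight of edges connecting follower $i$ directly to leaders; this row sum is nonnegative everywhere and strictly positive for those followers that are direct neighbors of some leader.

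For the positive definiteness claim, I would then argue as follows. Assumption~\ref{A3} supplies, for every follower, a path to at least one leader, which means that the followers-only induced subgraph augmented by the ``leader-coupling'' terms on the diagonal is irreducibly diagonally dominant in the sense of Taussky. A standard lemma (or, equivalently, the fact that a symmetric Z-matrix that is irreducibly diagonally dominant with at least one strictly dominant row is a nonsingular symmetric M-matrix) then yields $\mathcal{L}_1\succ 0$. Alternatively, I would run a direct energy argument: for any $x\neq 0$, use $x^{\top}\mathcal{L}_1 x = \tfrac12\sum_{(i,j)\in\mathcal{F}\times\mathcal{F}}a_{ij}(x_i-x_j)^2 + \sum_{i\in\mathcal{F}}\bigl(\sum_{k\in\mathcal{R}}a_{ik}\bigr)x_i^2$ and invoke the path-to-leader assumption to rule out $x^{\top}\mathcal{L}_1 x=0$ unless $x=0$.

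Given positive definiteness plus the M-matrix structure, the inverse $\mathcal{L}_1^{-1}$ is entrywise nonnegative. Since every entry of $\mathcal{L}_2$ is non-positive (off-diagonal Laplacian entries), the product $-\mathcal{L}_1^{-1}\mathcal{L}_2 = \mathcal{L}_1^{-1}(-\mathcal{L}_2)$ is entrywise nonnegative, which is the second claim. For the row-sum identity I would use the global property $\mathcal{L}\,1_{n+m}=0$; reading this off on the first $n$ rows gives $\mathcal{L}_1 1_n + \mathcal{L}_2 1_m = 0$, hence $1_n = -\mathcal{L}_1^{-1}\mathcal{L}_2\, 1_m$, which is exactly the statement that every row of $-\mathcal{L}_1^{-1}\mathcal{L}_2$ sums to one.

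The only genuinely delicate step is the invertibility/positive definiteness of $\mathcal{L}_1$: everything else is a one-line consequence. That step is where Assumption~\ref{A3} must be used in an essential way, because without the existence of a follower-to-leader path some connected component of followers would give $\mathcal{L}_1$ a zero row sum with no leader coupling, leaving $1$ in its kernel. The cleanest way to make this airtight, I think, is to cite the irreducible-diagonal-dominance theorem rather than reprove it; since the lemma itself is already quoted from \citep{RN1095}, a brief sketch following the outline above suffices.
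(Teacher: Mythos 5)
Your proposal is correct. Note that the paper does not prove this lemma at all -- it is quoted verbatim from the cited reference \citep{RN1095} -- so there is no in-paper argument to compare against; your sketch supplies the standard proof. Both of your routes to positive definiteness work here: Assumption~\ref{A3} makes the follower subgraph connected, so $\mathcal{L}_1$ is irreducible, and the existence of a follower--leader edge gives at least one strictly dominant row (Taussky); equivalently, in the energy identity $x^{\top}\mathcal{L}_1 x = \tfrac12\sum_{i,j\in\mathcal{F}}a_{ij}(x_i-x_j)^2 + \sum_{i\in\mathcal{F}}\bigl(\sum_{k\in\mathcal{R}}a_{ik}\bigr)x_i^2$, vanishing forces $x$ constant across the connected follower set and zero at the follower adjacent to a leader, hence $x=0$. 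The remaining two claims then follow exactly as you say, from the inverse-nonnegativity of a nonsingular symmetric M-matrix, the sign of $\mathcal{L}_2$, and $\mathcal{L}_1 1_n + \mathcal{L}_2 1_m = 0$.
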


\section{Problem Formulation}
Consider $n$ nonholonomic planar vehicles, being viewed as followers with regards to the leaders introduced later. Let $\mathcal{F}=\{1,...n\}$ and define the pose as 
\begin{equation}\label{eq1}
\eta_{F,i}=[p_{F,i}^T,\theta_{F,i}]^T,\forall i \in\mathcal{F},
\end{equation}
where $p_{F,i}=[x_{F,i},y_{F,i}]^T$  and $\theta_{F,i}$ denote position and orientation in Cartesian coordinate system. The planar vehicles are supposed to satisfy a certain nonloholomic constraints, obstructing the Brockett's necessary condition \citep{RN52} for the existence of smooth time-invariant stabilizer. Typical nonholonomic planar vehicles include wheeled mobile robots \citep{lixia}, underactuated surface vehicles\citep{RN402} and underactuated hovercrafts \citep{yan2021robust}.

Besides $n$ followers, consider $m$ virtual leaders, labeled as $n+1$ to $n+m$. At present, we assume that leaders have already achieved a fixed formation pattern and their motions are independent of followers. Define the following leaders' poses,
\begin{equation}\label{ls11}
{\eta _{L,j}}\left( t \right)={\eta _c}\left( t \right) + {d_{L,j}}\in\mathbb{R}^3, j\in\mathcal{R}=\{n+1,...,n+m\},
\end{equation}
where ${\eta _{L,j}}\left( t \right) =[p_{L,j}^T(t),\theta_{L,j}(t)]^T , {\eta _c}\left( t \right) = {\left[ {p_c^T\left( t \right),{\theta _c}\left( t \right)} \right]^T} $ and ${d_{L,j}} = {\left[ {d_{L,jp}^T,{d_{L,j\theta }}} \right]^T}\in \mathbb{R}^3$ is a constant vector with $d_{L,jp}=[d_{L,jx},d_{jy}]^T$. Intuitively, $p_{L,j}=[x_{L,j},y_{L,j}]^T$ and $\theta_{L,j}$ denote the position and orientation of the leader $j$, respectively.
\begin{assumption}\label{a2}
The $\eta_{c}(t)$ is third-differentiable with uniformly bounded time derivatives.
\end{assumption}
\begin{assumption}\label{A4}
The $m$ constant coordinates $d_{L,jp},\forall j\in\mathcal{R}$ span a two-dimensional convex $m-$polygon hull that encloses the origin; and $0\in[\min\{d_{L,j\theta}\},\max\{d_{L,j\theta}\}],\forall j\in\mathcal{R}$.
\end{assumption}
Define two convex sets formed by leader coordinates \eqref{ls11} as
\begin{equation}\label{eq212}
\begin{split}
{\mathcal{C}_{Lp}} &= \left\{ {\left. {\left[ {x,y} \right] \in {\mathbb{R}^2}} \right|\left[ {x,y} \right]^T = \sum\limits_{j = n + 1}^{n + m} {{b_{p,j}}p_{L,j}} } \right\}, \mathcal{C}_{L\theta}= \left\{ {\left. {\theta \in {\mathbb{R}}} \right|\theta = \sum\limits_{j = n + 1}^{n + m} {{b_{p,j}}\theta_{L,j}}} \right\},
\end{split}
\end{equation}
where $b_{p,j}$ satisfy $b_{p,j}\in\mathbb{R}_{[0,1]}$ and $\sum\limits_{j = n + 1}^{n + m} {{b_{p,j}} = 1}$.

Define
\begin{equation}\label{eq110}
\eta_{F,ir}=[p_{F,ir}^T,\theta_{F,ir}]^T \in\mathbb{R}^3, \forall i \in \mathcal{F}.
\end{equation}
where $p_{F,ir}=[x_{F,ir},y_{F,ir}]^T$ denotes position and $\theta_{F,ir}$ is orientation.

Then, the problem is formally stated as: Under Assumptions 1-3, design a virtual reference signal for the holonomic vehicles and derive the conditions within which the nonholonomic vehicles can converge into the convex hull spanned by leaders.

\section{Reference Design and Stability Analysis}
Let us design
\begin{equation}\label{eq12}
\left\{ \begin{split}
{{\dot \eta }_{F,ir}} &= {\varphi _{F,ir}},\\
{{\dot \varphi }_{F,ir}} &= {\rho _{F,ir}},\\
{{\dot \rho }_{F,ir}} &=  - {g_1}{\varphi _{F,ir}} - {g_2}{\rho _{F,ir}} - {g_3}{s_{F,ir}} -  {g_4}\frac{{{s_{F,ir}}}}{{\sqrt {{{\left\| {{s_{F,ir}}} \right\|}^2} + {\gamma _1}^2{e^{ - 2{\gamma _2}t}}} }},
\end{split} \right.
\end{equation}
where $\eta_{F,ir}\triangleq[p_{F,ir}^T,\theta_{F,ir}]^T,g_1>0,g_2>0,g_3>0,g_4>0,\gamma_1>0,\gamma_2>0$ and
\begin{equation}\label{eq13}
\begin{split}
{s_{F,ir}} &= \sum\limits_{j = 1}^n {{a_{ij}}\left( {{\rho _{F,ir}} - {\rho _{F,jr}}} \right)}  + {g_1}\sum\limits_{j = 1}^n {{a_{ij}}\left( {{\varphi _{F,ir}} - {\varphi _{F,jr}}} \right)} \\
&~+ {g_2}\sum\limits_{j = 1}^n {{a_{ij}}\left( {{\eta _{F,ir}} - {\eta _{F,jr}}} \right)} + \sum\limits_{j = n + 1}^{n + m} {{a_{ij}}\left( {{\rho _{F,ir}} - {{\ddot \eta }_{L,j}}} \right)}  \\
 &~+ {g_1}\sum\limits_{j = n+1}^{n + m} {{a_{ij}}\left( {{\varphi _{F,ir}} - {{\dot \eta }_{L,j}}} \right)}  + {g_2}\sum\limits_{j = n+1}^{n + m} {{a_{ij}}\left( {\eta _{F,ir}} - \tilde{\eta}_{L,j} \right)},
\end{split}
\end{equation}
where $\tilde{{\eta}}_{L,j}=\eta_{c}(t)+\mu d_{L,j}$ and $0<\mu<1$. 
\begin{theorem}\label{VRT}
Given Assumptions \ref{a2}-\ref{A3} and $g_1>0, g_2>0, g_3>0, g_4\geq n\bar{\eta}_{c}, \bar{\eta}_{c}= \mathop {\sup }\limits_{t \ge 0} \left\| {\dddot{\eta}_{c} + {g_1}{\ddot \eta}_{c}  + {g_2}{\dot \eta}_{c} } \right\|,\gamma_1>0$ and $\gamma_2>0
$, then the $\eta_{F,ir}=[p_{F,ir}^T,\theta_{F,ir}]^T, \forall i\in \mathcal{F}$ generated by \eqref{eq12} satisfies
\begin{equation}\label{eq112}
\mathop {\lim }\limits_{t \to +\infty } {\eta _{F,r}} =  - \left( {\mathcal{L}_1^{ - 1}{\mathcal{L}_2} \otimes {I_3}} \right)\tilde{\eta}_L,
\end{equation}
where
\begin{equation*}\label{e21x}
\begin{split}
\eta_{F,r}&\triangleq [\eta_{F,1r}^T,...,\eta_{F,nr}^T]^T\in\mathbb{R}^{3n},
\tilde{\eta}_L\triangleq [\tilde{\eta}_{n+1}^T,...,\tilde{\eta}_{n+m}^T]^T\in\mathbb{R}^{3m}.
\end{split}
\end{equation*} 
\end{theorem}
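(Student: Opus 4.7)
The plan is to recognize that \eqref{eq12}--\eqref{eq13} is a distributed sliding-mode design on a third-order chain, with $s_{F,ir}$ playing the role of the sliding variable, and then combine a Lyapunov argument (to drive $s$ to zero) with the Hurwitz character of the reduced-order dynamics (to drive $\eta_{F,r}$ to the containment target). First, I would stack the variables and rewrite $s\triangleq[s_{F,1r}^T,\dots,s_{F,nr}^T]^T$ in Kronecker form as
\begin{equation*}
s = (\mathcal{L}_1\otimes I_3)\bigl(\rho_{F,r}+g_1\varphi_{F,r}+g_2\eta_{F,r}\bigr)+(\mathcal{L}_2\otimes I_3)\bigl(\ddot{\eta}_L+g_1\dot{\eta}_L+g_2\tilde{\eta}_L\bigr),
\end{equation*}
which, by $\dot{\eta}_{L,j}=\dot{\eta}_c$ and $\dot{\tilde{\eta}}_{L,j}=\dot{\eta}_c$, is equivalent to $s_{F,ir}=\ddot{\chi}_i+g_1\dot{\chi}_i+g_2\chi_i$ with the local containment error $\chi_i=[\mathcal{L}_1\eta_{F,r}+\mathcal{L}_2\tilde{\eta}_L]_i$.

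Second, I would differentiate $s_{F,ir}$ along \eqref{eq12}. The follower-follower cross-differences reorganize into $(\mathcal{L}_1\otimes I_3)$-weighted quantities, and the leader-related terms collapse—after using $\ddot{\eta}_{L,j}=\ddot{\eta}_c$, $\dddot{\eta}_{L,j}=\dddot{\eta}_c$—into a per-row forcing of magnitude $\bigl(\sum_{j=n+1}^{n+m}a_{ij}\bigr)\,\|\dddot{\eta}_c+g_1\ddot{\eta}_c+g_2\dot{\eta}_c\|\le n\bar{\eta}_c$ in each block. Taking the Lyapunov candidate $V=\tfrac12 s^T(\mathcal{L}_1^{-1}\otimes I_3)s$ (which is valid because Lemma 1 gives $\mathcal{L}_1\succ 0$), the $(\mathcal{L}_1^{-1}\otimes I_3)$ weight cancels $(\mathcal{L}_1\otimes I_3)$ and leaves, after exploiting $-\mathcal{L}_1^{-1}\mathcal{L}_2 1_m=1_n$ from Lemma 1 together with the elementary estimate
\begin{equation*}
\frac{\|s_{F,ir}\|^2}{\sqrt{\|s_{F,ir}\|^2+\gamma_1^2 e^{-2\gamma_2 t}}}\ \ge\ \|s_{F,ir}\|-\gamma_1 e^{-\gamma_2 t},
\end{equation*}
a bound of the form $\dot V\le -g_3\|s\|^2-(g_4-n\bar{\eta}_c)\sum_i\|s_{F,ir}\|+n g_4\gamma_1 e^{-\gamma_2 t}$. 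The hypothesis $g_4\ge n\bar{\eta}_c$ then yields $\dot V\le -g_3\|s\|^2+\varepsilon(t)$ with $\varepsilon\in L^1$, from which $s(t)\to 0$ follows by Barbalat.

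Third, once $s\to 0$, the identity $s_{F,ir}=\ddot{\chi}_i+g_1\dot{\chi}_i+g_2\chi_i$ becomes a forced linear ODE whose unforced characteristic polynomial $\lambda^2+g_1\lambda+g_2$ is Hurwitz for $g_1,g_2>0$; input-to-state stability of the operator then gives $\chi\to 0$, i.e.\ $(\mathcal{L}_1\otimes I_3)\eta_{F,r}+(\mathcal{L}_2\otimes I_3)\tilde{\eta}_L\to 0$, which after left-multiplication by $(\mathcal{L}_1^{-1}\otimes I_3)$ is exactly \eqref{eq112}.

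The main obstacle is the second step: after differentiation one also picks up cross-terms of the form $(g_2-g_1)(\mathcal{L}_1\otimes I_3)\varphi_{F,r}+(g_1-g_2)(\mathcal{L}_1\otimes I_3)\rho_{F,r}$ that do not vanish unless $g_1=g_2$. To close the Lyapunov estimate for general gains one has to either first establish boundedness of $\varphi_{F,r},\rho_{F,r}$ via a cascade argument on the underlying third-order chain, or augment $V$ with quadratic terms in $\varphi_{F,r},\rho_{F,r}$ so that these cross-terms are dominated by $-g_3\|s\|^2$; this bookkeeping, together with verifying that $\varepsilon(t)$ stays integrable, is where the real work lies.
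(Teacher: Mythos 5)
Your skeleton is the paper's own: it sets $\xi_F=\eta_{F,r}+(\mathcal{L}_1^{-1}\mathcal{L}_2\otimes I_3)\tilde{\eta}_L$ and $\mathbf{S}_F=\ddot{\xi}_F+g_1\dot{\xi}_F+g_2\xi_F$, so that $s_{F,r}=(\mathcal{L}_1\otimes I_3)\mathbf{S}_F$ and your $V=\tfrac12 s^T(\mathcal{L}_1^{-1}\otimes I_3)s$ is literally its $V_1=\tfrac12\mathbf{S}_F^T(\mathcal{L}_1\otimes I_3)\mathbf{S}_F$; it uses the same identity $-\mathcal{L}_1^{-1}\mathcal{L}_2 1_m=1_n$, the same lower bound on $\|s_{F,ir}\|^2/\sqrt{\|s_{F,ir}\|^2+\gamma_1^2e^{-2\gamma_2 t}}$ to absorb the leader forcing under $g_4\ge n\bar{\eta}_c$, and the same Hurwitz second-order filter $\ddot{\xi}_F+g_1\dot{\xi}_F+g_2\xi_F=\mathbf{S}_F$ to pass from $\mathbf{S}_F\to 0$ to \eqref{eq112}. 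One genuine difference: the paper does not stop at $\varepsilon\in L^1$ plus Barbalat, but upgrades the estimate to $\dot V_1\le -\lambda_1 V_1 + n^2\gamma_1\bar{\eta}_c e^{-\gamma_2 t}$ and integrates by the comparison lemma, obtaining global exponential convergence. That route is both stronger and cheaper, since your Barbalat step would additionally require you to establish uniform continuity of $s$ (hence boundedness of the full state) before concluding $s\to 0$ from $s\in L^2$.

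More importantly, the ``main obstacle'' you flag is real, and you should not expect to resolve it by consulting the source, because the paper does not overcome it either. Differentiating along \eqref{eq12} gives $\dot\rho_{F,r}+g_1\dot\varphi_{F,r}+g_2\dot\eta_{F,r}=(g_2-g_1)\varphi_{F,r}+(g_1-g_2)\rho_{F,r}-g_3 s_{F,r}-g_4P_Fs_{F,r}$, which is exactly the residue you identify. In the paper's display \eqref{eq20} this residue vanishes only because the expansion substitutes $\ddot\eta_{F,r}=\varphi_{F,r}$ and $\dot\eta_{F,r}=\rho_{F,r}$ in place of the correct $\ddot\eta_{F,r}=\rho_{F,r}$ and $\dot\eta_{F,r}=\varphi_{F,r}$; the cancellation is an algebra slip, exact only when $g_1=g_2$ or if \eqref{eq12} is amended to $\dot\rho_{F,ir}=-g_1\rho_{F,ir}-g_2\varphi_{F,ir}-\cdots$ (presumably the intended design, since then $\dddot{\xi}_F+g_1\ddot{\xi}_F+g_2\dot{\xi}_F$ collapses cleanly onto $-g_3s_{F,r}-g_4P_Fs_{F,r}$ plus the leader term). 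So your proposal is incomplete at precisely the point you suspected: to close it for general gains you must either restrict to $g_1=g_2$, correct the design, or carry out the augmented-Lyapunov/cascade bookkeeping you sketch; the theorem as stated is not proved without one of these repairs.
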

\begin{proof}
  Define three stacked vectors belonging to $\mathbb{R}^{3n}$,
\begin{equation}\label{eq15}
\begin{split}
\varphi_{F,r}&= [\varphi_{F,1r}^T,...,\varphi_{F,nr}^T]^T,\rho_{F,r}= [\rho_{F,1r}^T,...,\rho_{F,nr}^T]^T,
s_{F,r}=[s_{F,1r}^T,...,s_{F,nr}^T]^T.
\end{split}
\end{equation}
By \eqref{eqls1} and \eqref{eq13}, we obtain
\begin{equation}\label{eq16}
\left\{ \begin{split}
{{\dot \eta }_{F,r}} &= {\varphi _{F,r}},\\
{{\dot \varphi }_{F,r}} &= {\rho _{F,r}},\\
{{\dot \rho }_{F,r}} &=  - {g_1}{\varphi _{F,r}} - {g_2}{\rho _{F,r}} - {g_3}{s_{F,r}} - {g_4}P_F{s_{F,r}},
\end{split} \right.
\end{equation}
where $P_F=\mathrm{diag}\{[\frac{\mathbf{1}_3^T}{\sqrt{\|s_{F,1d}\|^2+\gamma_1^2e^{-2\gamma_2t}}},...,\frac{\mathbf{1}_3^T}{\sqrt{\|s_{F,nd}\|^2+\gamma_1^2e^{-2\gamma_2t}}}]\}\in\mathbb{R}^{3n\times 3n}$. Then, arrange ${s}_{F,r}$ as follow,
\begin{equation}\label{eq17}
\begin{split}
{s_{F,r}} &= \left( {{\mathcal{L}_1} \otimes {I_3}} \right)\left( {{\rho _{F,r}} + {g_1}{\varphi _{F,r}} + {g_2}{\eta _{F,r}}} \right) + \left( {{\mathcal{L}_2} \otimes {I_3}} \right)\left[ {{{\ddot {\tilde{\eta}} }_L} + {g_1}{{\dot {\tilde{\eta}} }_L} + {g_2}\tilde{\eta}_L} \right],
\end{split}
\end{equation}
where the facts $\ddot{\tilde{\eta}}_{L}=\ddot{\eta}_L$ and $\dot{\tilde{\eta}}_L=\dot{\eta}_L$ are used. Define
\begin{equation}\label{eq18}
\xi_F=\eta_{F,r}+(\mathcal{L}_1^{-1}\mathcal{L}_2 \otimes I_3)\tilde{\eta}_L,
\end{equation}
and
\begin{equation}\label{eq19}
\begin{split}
\mathbf{S}_F &= {{\ddot \xi }_F} + {g_1}{{\dot
\xi }_F} + {g_2}{\xi _F}\\
 &= {\rho _{F,r}} + \left( {\mathcal{L}_1^{ - 1}{\mathcal{L}_2} \otimes {I_3}} \right){{\ddot {\tilde\eta} }_L} + {g_1}\left[ {{\varphi _{F,r}} + \left( {\mathcal{L}_1^{ - 1}{\mathcal{L}_2} \otimes {I_3}} \right){{\dot {\tilde\eta} }_L}} \right]+ {g_2}\left[ {{\eta _{F,r}} + \left( {\mathcal{L}_1^{ - 1}{\mathcal{L}_2} \otimes {I_3}} \right)\tilde{\eta}_L} \right].
\end{split}
\end{equation}
By \eqref{eq17} and \eqref{eq18}, we get
\begin{equation}\label{eq21}
s_{F,r}=(\mathcal{L}_1\otimes I_3)\mathbf{S}_F.
\end{equation}
It is obvious that the convergence to zero of $\mathbf{S}_F$ can lead to the convergence to zero of $\xi_F$, and hence, achieve $\mathop {\lim }\limits_{t \to \infty }\eta_{F,r}=-(\mathcal{L}_1^{-1}\mathcal{L}_2\otimes I_3)\tilde{\eta}_L$. Differentiate $\mathbf{S}_F$ with respect to time $t$ and yield
\begin{equation}\label{eq20}
\begin{split}
{{\dot{ \mathbf{S}}}_F} &= {{\dddot{\eta}}_{F,r}} + \left( {\mathcal{L}_1^{ - 1}{\mathcal{L}_2} \otimes {I_3}} \right){{\dddot{\tilde\eta}}_L} + {g_1}\left[ {{{\ddot \eta }_{F,r}} + \left( {\mathcal{L}_1^{ - 1}{\mathcal{L}_2} \otimes {I_3}} \right){{\ddot {\tilde\eta} }_L}} \right] + {g_2}\left[ {{{\dot \eta }_{F,r}} + \left( {\mathcal{L}_1^{ - 1}{\mathcal{L}_2} \otimes {I_3}} \right){{\dot {\tilde\eta} }_L}} \right]\\
 &=  - {g_1}{\varphi _{F,r}} - {g_2}{\rho _{F,r}} - {g_3}{s_{F,r}} - {g_4}{P_F}{s_{F,r}} + \left( {\mathcal{L}_1^{ - 1}{\mathcal{L}_2} \otimes {I_3}} \right){{\dddot{{\tilde\eta}}}_L}\\
 &~+ {g_1}\left[ {{\varphi _{F,r}} + \left( {\mathcal{L}_1^{ - 1}{\mathcal{L}_2} \otimes {I_3}} \right){{\ddot {\tilde\eta} }_L}} \right]+ {g_2}\left[ {{\rho _{F,r}} + \left( {\mathcal{L}_1^{ - 1}{\mathcal{L}_2} \otimes {I_3}} \right){{\dot {\tilde\eta} }_L}} \right]\\
 &=  - {g_3}{s_{F,r}} - {g_4}{P_F}{s_{F,r}} + \left( {\mathcal{L}_1^{ - 1}{\mathcal{L}_2} \otimes {I_3}} \right)\left( {{{\dddot{\tilde\eta}}_L} + {g_1}{{\ddot {\tilde\eta} }_L} + {g_2}{{\dot {\tilde\eta} }_L}} \right)\\
 &=- {g_3}(\mathcal{L}_1\otimes I_3){S_{F}} - {g_4}{P_F}(\mathcal{L}_1\otimes I_3){S_{F}}+ \left( {\mathcal{L}_1^{ - 1}{\mathcal{L}_2} \otimes {I_3}} \right)\left( {{{\dddot{{\tilde\eta}}}_L} + {g_1}{{\ddot {\tilde\eta} }_L} + {g_2}{{\dot {\tilde\eta} }_L}} \right).
\end{split}
\end{equation}
Using the property that $\mathcal{L}_1$ is positive definite, choose a Lyapunov candidate as,
\begin{equation}\label{eq22}
V_1=0.5\mathbf{S}_F^T(\mathcal{L}_1\otimes I_3)\mathbf{S}_F,
\end{equation}
which satisfies $0.5\lambda_{\min} (\mathcal{L}_1)\|\mathbf{S}_F\|^2\leq V_1 \leq0.5\lambda_{\max} (\mathcal{L}_1)\|\mathbf{S}_F\|^2$ and $\|\mathbf{S}_F\|^2\geq 2 \frac{V_1}{\lambda_{\max}(\mathcal{L}_1)}$. The time derivative of \eqref{eq22} on the trajectory of \eqref{eq20} is
\begin{equation}\label{eq23}
\begin{split}
{\dot V}_1 &=  - {g_3}\mathbf{S}_F^T{\left( {{\mathcal{L}_1} \otimes {I_3}} \right)^2}{\mathbf{S}_F} - {g_4}{P_F}\mathbf{S}_F^T{\left( {{\mathcal{L}_1} \otimes {I_3}} \right)^2}{\mathbf{S}_F}+ \mathbf{S}_F^T\left( {{\mathcal{L}_1} \otimes {I_3}} \right)\left( {\mathcal{L}_1^{ - 1}{\mathcal{L}_2} \otimes {I_3}} \right)\left( {{{\dddot {\tilde\eta} }_L} + {g_1}{{\ddot {\tilde\eta} }_L} + {g_2}{\dot{{\tilde\eta}} _L}} \right)\\
 &=  - {g_3}\mathbf{S}_F^T{\left( {{\mathcal{L}_1} \otimes {I_3}} \right)^2}{\mathbf{S}_F} - {g_4}{P_4}s_{F,r}^T{s_{F,r}} + {s_{F,r}^T}\left( {\mathcal{L}_1^{ - 1}{\mathcal{L}_2} \otimes {I_3}} \right)\left( {{{\dddot {\tilde\eta} }_L} + {g_1}{{\ddot {\tilde\eta} }_L} + {g_2}{\dot{{\tilde\eta}} _L}} \right),
\end{split}
\end{equation}
where the equation $s_{F,r}^T=\mathbf{S}_F^T(\mathcal{L}_1\otimes I_3)$ is utilized. It then by $-\mathcal{L}_1^{-1}\mathcal{L}_2 1_m=1_n$ follows that
\begin{equation}\label{sxxsa}
\begin{split}
s_{F,r}^T\left( {\mathcal{L}_1^{ - 1}{\mathcal{L}_2} \otimes {I_3}} \right)\left( {{{\dddot{\tilde\eta}}_L} + {g_1}{{\ddot {\tilde\eta} }_L} + {g_2}{{\dot {\tilde\eta} }_L}} \right)
&~=s_{F,r}^T\left( {\mathcal{L}_1^{ - 1}{\mathcal{L}_2} \otimes {I_3}} \right)\left[1_m\otimes(\dddot{\eta}_c+g_1\ddot{\eta}_c+g_2\dot{\eta}_c)\right]\\
&~=s_{F,r}^T\left[ {\mathcal{L}_1^{ - 1}{\mathcal{L}_2}1_m \otimes (\dddot{\eta}_c+g_1\ddot{\eta}_c+g_2\dot{\eta}_c)} \right]\\
&~=-s_{F,r}^T\left[ {1_n \otimes (\dddot{\eta}_c+g_1\ddot{\eta}_c+g_2\dot{\eta}_c)} \right]\\
&~\le n\sum\limits_{i = 1}^n {\left\| {s_{F,ir}^T} \right\|} {{\bar \eta }_{c}},
\end{split}
\end{equation}
where the fact $\tilde{\eta}_{L,j}^{(q)}=\eta_{c}^{(q)}(t), \forall j\in\mathcal{R}, q\in \mathbb{Z}_{\geq1}$ is also applied.
The combination of \eqref{eq23} and \eqref{sxxsa} leads to
\begin{small}
\begin{equation}\label{eq24}
\begin{split}
{\dot V}_1 &\le  - {g_3}\mathbf{S}_F^T{\left( {{\mathcal{L}_1} \otimes {I_3}} \right)^2}{\mathbf{S}_F} - \sum\limits_{i = 1}^n {\frac{{{g_4}s_{F,ir}^T{s_{F,ir}}}}{{\sqrt {{{\left\| {{s_{F,ir}}} \right\|}^2} + \gamma _1^2{e^{ - 2{\gamma _2}t}}} }} + } \sum\limits_{i = 1}^n {\left\| {s_{F,ir}^T} \right\|n\bar{\eta}_{c}} \\
 &=  - {g_3}\mathbf{S}_F^T{\left( {{\mathcal{L}_1} \otimes {I_3}} \right)^2}{\mathbf{S}_F} - \sum\limits_{i = 1}^n {\left( {\frac{{{g_4}{{\left\| {{s_{F,ir}}} \right\|}^2}}}{{\sqrt {{{\left\| {{s_{F,ir}}} \right\|}^2} + \gamma _1^2{e^{ - 2{\gamma _2}t}}} }} - \left\| {s_{F,ir}^T} \right\|n{{\bar \eta }_{c}}} \right)}.
\end{split}
\end{equation}
\end{small}
After some direct computations,
\begin{equation}\label{eq26}
\begin{split}
{\dot V}_1 &\le  - {g_3}\lambda _{\min }^2\left( {{\mathcal{L}_1}} \right){\left\| {{S_{F,r}}} \right\|^2} \\
&~- \sum\limits_{i = 1}^n {\left( {\frac{{\left( {{g_4} - n{{\bar \eta }_{c}}} \right){{\left\| {{s_{F,ir}}} \right\|}^2}}}{{\left\| {{s_{F,ir}}} \right\| + {\gamma _1}{e^{ - {\gamma _2}t}}}} - \frac{{\left\| {{s_{F,ir}}} \right\|}}{{\left\| {{s_{F,ir}}} \right\| + {\gamma _1}{e^{ - {\gamma _2}t}}}}n{{\bar \eta }_{c}}{\gamma _1}{e^{ - {\gamma _2}t}}} \right)} \\
 &\le  - \frac{{2{g_3}\lambda _{\min }^2\left( {{\mathcal{L}_1}} \right)}}{{{\lambda _{\max }}\left( {{\mathcal{L}_1}} \right)}}V_1 + \sum\limits_{i = 1}^n {\frac{{\left\| {{s_{F,ir}}} \right\|}}{{\left\| {{s_{F,ir}}} \right\| + {\gamma _1}{e^{ - {\gamma _2}t}}}}n{{\bar \eta }_{c}}{\gamma _1}{e^{ - {\gamma _2}t}}}
\end{split}
\end{equation}
where the $g_4\geq n\bar{\eta}_{L,*}$ is used. Let $\lambda_1=\frac{2g_3\lambda_{\min}^2(\mathcal{L}_1)}{\lambda_{\max}(\mathcal{L}_1)}$, convert the second inequality of \eqref{eq26} into
\begin{equation}\label{eq27}
{\dot V}_1\leq -\lambda_1{ V}_1+n^2\gamma_1\bar{\eta}_{c}e^{-\gamma_2t}.
\end{equation}
Integrating both sides of \eqref{eq27} by comparison principle \citep{NLS}, we yield
\begin{equation}\label{eq28}
\begin{split}
V_1\left( t \right) &\le {e^{ - {\lambda _1}t}}V_1\left( 0 \right) + \int_0^t {{e^{ - \left( {{\lambda _1} - \chi } \right)}}} n^2{\gamma _1}{{\bar \eta }_{c}}{e^{ - {\lambda _1}\chi }}{\rm{d}}\chi \\
 &= \left\{ \begin{array}{l}
{e^{ - {\lambda _1}t}}V_1\left( 0 \right) + n^2{\gamma _1}{{\bar \eta }_{c}}t{e^{ - {\lambda _1}t}},\;\;\;\;{\rm{if}}\;{\lambda _1} = {\gamma _2},\\
{e^{ - {\lambda _1}t}}V_1\left( 0 \right) + n^2{\gamma _1}{{\bar \eta }_{c}}\displaystyle\frac{{{e^{ - {\lambda _1}t}} - {e^{ - {\gamma _2}t}}}}{{{\gamma _2} - {\lambda _1}}},\;\;\;\;{\rm{if}}\;{\lambda _1} \ne {\gamma _2},
\end{array} \right.
\end{split}
\end{equation}
and find out that $V_1$ converges to zero globally exponentially. Hence, $\mathbf{S}_F$ is globally exponentially convergent. By \eqref{eq19}, we obtain $\ddot{\xi}_F=-g_1\dot{\xi}_F-g_2\xi_F+\mathbf{S}_F$, which further implies that $\xi_F$ and $\dot{\xi}_F$ are globally exponentially convergent to zero. Henceforth, we have
\begin{equation}\label{eqx36}
\mathop {\lim }\limits_{t \to +\infty } {\eta _{F,r}} =  - \left( {\mathcal{L}_1^{ - 1}{\mathcal{L}_2} \otimes {I_3}} \right)\tilde{\eta}_L.
\end{equation}
Because the sum of each row of $-\mathcal{L}_1^{-1}\mathcal{L}_2$ equals $1$, we obtain $p_{F,ir}\to\tilde{\mathcal{C}}_{Lp}$ and $\theta_{F,ir}\to\tilde{\mathcal{C}}_{L\theta}$ as $t\to +\infty$. Additionally, by the above analysis and Assumption 2.2, it is direct to verify that the derivatives $\dot{\eta}_{F,ir}=[\dot{p}_{F,ir}^T,\dot{\theta}_{F,ir}]^T, \ddot{\eta}_{F,ir}=[\ddot{p}_{F,ir}^T,\ddot{\theta}_{F,ir}]^T$ and $\dddot{\eta}_{F,ir}=[\dddot{p}_{F,ir}^T,\dddot{\theta}_{F,ir}]^T$ are bounded for all $t\geq 0$. This completes the proof.
\end{proof}
According to Lemma 1, the Theorem 1 tells that the virtual reference signal propagated by \eqref{eq12} would converge into the convex set formed by vectors $\tilde{\eta}_{L,j}=\eta_c(t)+\mu d_{L,j},\forall j\in\mathcal{R}$. Note that the constant $\mu$ scales down that original convex hulls. This scaling is an essential step for steering the hovercrafts converge into the original convex hulls spanned by leaders. Let $\tilde{\mathcal C}_{Lp}$ and $\tilde{\mathcal C}_{L\theta}$ denote the convex hulls formed by $\tilde{\eta}_{L,j},\forall j\in\mathcal{R}$, we have $\tilde{\mathcal{C}}_{Lp}\subseteq\mathcal{C}_{Lp}$ and $\tilde{\mathcal{C}}_{L\theta}\subseteq\mathcal{C}_{L\theta}$.
Define $\bar{\mathcal{C}}_{Lp}=\mathbb{R}^2-{\mathcal{C}}_{Lp}$ and $\bar{\mathcal{C}}_{L\theta}=\mathbb{R}-\mathcal{C}_{L\theta}$ and propose the following lemma,
\begin{theorem}
Given points $p_1\in\tilde{\mathcal{C}}_{Lp},p_2\in\bar{\mathcal{C}}_{Lp}$ and $\theta_1\in\tilde{\mathcal{C}}_{L\theta},\theta_2\in\bar{\mathcal{C}}_{L\theta}$, there exist constants $\alpha_p$ and $\alpha_\theta$ such that
\begin{equation}\label{eqnelx}
\begin{split}
\|p_1-p_2\|&>\alpha_p\triangleq(1-\mu)\min\left\{\frac{|d_{L,jx}d_{L,iy}-d_{L,ix}d_{L,jy}|}{\sqrt{(d_{L,jy}-d_{L,iy})^2+(d_{L,ix}-d_{L,iy})^2}}\right\},\\
|\theta_1-\theta_2|&>\alpha_\theta\triangleq(1-\mu)\min\left\{|d_{L,j\theta}|\right\},\forall i,j \in\mathcal{R}.
\end{split}
\end{equation}
\end{theorem}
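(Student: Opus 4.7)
The plan is to reduce both inequalities to a standard computation about a convex polygon and its homothetic shrinking about the origin. Since $\mathcal{C}_{Lp}$ and $\tilde{\mathcal{C}}_{Lp}$ are translates of origin-centered polygons by the common offset $p_c(t)$, the difference $p_1-p_2$ is independent of $p_c(t)$ and can be analyzed after setting $p_c=0$. Under this reduction, $\mathcal{C}_{Lp}$ coincides with the convex hull $D=\mathrm{co}\{d_{L,jp}\}_{j\in\mathcal{R}}$, and $\tilde{\mathcal{C}}_{Lp}$ coincides with the homothet $\mu D$; by Assumption \ref{A4} the origin lies inside $D$, so $\mu D\subset D$ with a positive gap. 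Any segment from $p_1\in\mu D$ to $p_2\notin D$ must cross $\partial D$, hence $\|p_1-p_2\|\ge \mathrm{dist}(\mu D,\partial D)$.

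The core step is to express $\mathrm{dist}(\mu D,\partial D)$ in closed form. Since $\partial D$ is the finite union of edges $e_{ij}=\overline{d_{L,ip}\,d_{L,jp}}$ over adjacent vertices, I would show
\[
\mathrm{dist}(\mu D,\partial D)=\min_{e_{ij}}\mathrm{dist}(\mu D,e_{ij})=(1-\mu)\min_{e_{ij}} d\bigl(0,\mathrm{line}(e_{ij})\bigr).
\]
The second equality is the key geometric identity: because $\mu D$ lies on the same side of $\mathrm{line}(e_{ij})$ as the origin, the point of $\mu D$ closest to this line lies on the scaled edge $\mu e_{ij}$, which is parallel to $e_{ij}$ and at perpendicular distance $\mu\, d(0,\mathrm{line}(e_{ij}))$ from the origin; so the gap between the two parallel lines is $(1-\mu)\,d(0,\mathrm{line}(e_{ij}))$, and the scaling map sends $\mu e_{ij}$ onto a subsegment of $e_{ij}$ so that this perpendicular distance is actually attained between the two segments. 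Then applying the standard point-to-line formula, $d(0,\mathrm{line}(e_{ij}))=|d_{L,jx}d_{L,iy}-d_{L,ix}d_{L,jy}|/\sqrt{(d_{L,jy}-d_{L,iy})^2+(d_{L,jx}-d_{L,ix})^2}$, and weakening the minimum over actual edges to a minimum over all pairs $i,j\in\mathcal{R}$ yields the claimed $\alpha_p$. The orientation case is a one-dimensional analogue: after subtracting $\theta_c(t)$, one has $\mathcal{C}_{L\theta}=[\min_j d_{L,j\theta},\max_j d_{L,j\theta}]\ni 0$ with $\tilde{\mathcal{C}}_{L\theta}=\mu\mathcal{C}_{L\theta}$, so the two endpoint gaps are $(1-\mu)|\min_j d_{L,j\theta}|$ and $(1-\mu)\max_j d_{L,j\theta}$; replacing these with $(1-\mu)\min_j|d_{L,j\theta}|$ gives the conservative $\alpha_\theta$.

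The main obstacle is the middle geometric identity: showing that the distance from $\mu D$ to each edge of $D$ is exactly $(1-\mu)$ times the perpendicular distance from the origin to the line through that edge. This is where Assumption \ref{A4} (origin enclosed by the polygon) is essential — it guarantees that scaled edges are parallel to and strictly inside their originals on the correct side, so that the one-sided projection argument works and no boundary/endpoint pathologies arise. Everything else in the proof is routine: linearity of the convex combinations, translation invariance of the Euclidean norm, and the familiar determinant form of the point-to-line distance.
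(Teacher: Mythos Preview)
Your proposal is correct and follows essentially the same route as the paper: both arguments show that the scaled hull $\tilde{\mathcal{C}}_{Lp}$ has edges parallel to those of $\mathcal{C}_{Lp}$, compute the perpendicular gap between each paired edge via the point-to-line distance formula, and take the minimum as $\alpha_p$ (with the obvious one-dimensional analogue for $\alpha_\theta$). Your reduction---first stripping off the common translation $p_c(t)$ and then invoking the homothety $D\mapsto\mu D$ about the origin---is a cleaner packaging of the same computation the paper carries out with the offset left in, and your segment-crossing argument makes explicit the step (only asserted in the paper) that $\|p_1-p_2\|$ is bounded below by the minimum edge-to-edge distance.
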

\begin{proof}
Suppose that the $\mathcal{C}_{Lp}$ is composed of $m$ edges, labeled as $E_{L,n+1},...,E_{L,n+m}$ by connecting the coordinates $p_{L,i},i\in\mathcal{R}$ in couterclockwise(or clockwise) circular order. Different from the edge for communication in graph theory, the edge $E_{L,i}, i \in \mathcal{R}$ here refers to the line segment connected by position coordinates $p_{L,i}=[x_{L,i},y_{L,i}]^T$ and $p_{L,j}=[x_{L,j},y_{L,j}]^T$, where, without making any confusion, the subscript is chosen as: $j=i+1$ if $i\neq n+m$; otherwise $j=n+1, \forall i\in\mathcal{R}$. By definition \eqref{ls11}, the vertex coordinates of the edge $E_{L,i}$ are
\begin{equation}\label{eql2x}
\begin{split}
p_{L,i}&=[x_c(t)+d_{L,ix},y_c(t)+d_{L,iy}]^T,\\
p_{L,j}&=[x_c(t)+d_{L,jx},y_c(t)+d_{L,jy}]^T.
\end{split}
\end{equation}
Then, the scaled coordinates of \eqref{eql2x}, according to $\tilde{\eta}_{L,j}$ in\eqref{eq13}, become
\begin{equation}\label{eql2y}
\begin{split}
\tilde{p}_{L,i}&=[x_c(t)+\mu d_{L,ix},y_c(t)+ \mu d_{L,iy}]^T,\\
\tilde{p}_{L,j}&=[x_c(t)+\mu d_{L,jx},y_c(t)+\mu d_{L,jy}]^T.
\end{split}
\end{equation}
We denote the edge connected by $\tilde{p}_{L,i}$ and $\tilde{p}_{L,j}$ as $\tilde{E}_{L,i}$. The relationship between the paired edges $(E_{L,i},\tilde{E}_{L,i})$, by \eqref{eql2x} and \eqref{eql2y}, satisfies
\begin{equation}\label{ex1x}
\left\| {{p_{L,i}} - {p_{L,j}}} \right\| = \mu \left\| {{{\tilde p}_{L,i}} - {{\tilde p}_{L,j}}} \right\|,
\end{equation}
which means that the length of edge $E_{L,i}$  is proportional to that of its scaled part $\tilde{E}_{L,i}$ at a ratio of $\mu$. This is true for all $m$ paired edges $(E_{L,i},\tilde{E}_{L,i}),\forall i \in \mathcal{R}$. Therefore, the coordinates $\tilde{p}_{L,i},\forall i \in \mathcal{R}$ span a sub-hull $\tilde{\mathcal{C}}_{Lp}$ that is not only similar to $\mathcal{C}_{Lp}$ but also is convex. According to Assumption \ref{A4} and the fact that $0<\mu<1$, any point in $\tilde{\mathcal{C}}_{Lp}$ is also in $\mathcal{C}_{Lp}$, namely, $\tilde{\mathcal{C}}_{Lp}\subseteq\mathcal{C}_{Lp}$. For $\tilde{\mathcal{C}}_{L\theta}$, we have $\tilde{\mathcal{C}}_{L\theta}=\{\theta\in\mathbb{R}|\theta\in[\theta_c(t)+\mu\min\{d_{L,j\theta}\},\theta_c(t)+\mu\max\{{d}_{L,j\theta}\}],\forall j \in \mathcal{R}\}$. Thus, it is direct to prove that $\tilde{\mathcal{C}}_{L\theta}\subseteq \mathcal{C}_{L\theta}$.

To prove the property 2, we calculate the straight-line equations of paired edges $E_{L,i}$ and $\tilde{E}_{L,i}$ as follows,
\begin{small}
\begin{subequations}\label{1s1}
\begin{equation}\label{line1x}
\begin{split}
&(d_{L,jy}-d_{L,iy})x+(d_{L,ix}-d_{L,iy})y+[x_c(t)+d_{L,jx}][y_c(t)+d_{L,iy}]-[x_c(t)+d_{L,ix}][y_c(t)+d_{L,jy}]=0;
\end{split}
\end{equation}
\begin{equation}\label{line1x2}
\begin{split}
 & \mu(d_{L,jy}-d_{L,iy})x+\mu(d_{L,ix}-d_{L,iy})y+[x_c(t)+\mu d_{L,jx}][y_c(t)+\mu d_{L,iy}]-[x_c(t)+\mu d_{L,ix}][y_c(t)+\mu d_{L,jy}]=0.
\end{split}
\end{equation}
\end{subequations}
\end{small}
Obviously, the straight lines described by \eqref{line1x} and \eqref{line1x2} are parallel; and so are the paired edges $E_{L,i}$ and $\tilde{E}_{L,i}$. According to basic geometric knowledge, the distance between $E_{L,i}$ and $\tilde{E}_{L,i}$ can be calculated as,
\begin{equation}\label{ewx1}
\mathrm{dis}(E_{L,i},\tilde{E}_{L,i})=(1-\mu)\frac{|d_{L,jx}d_{L,iy}-d_{L,ix}d_{L,jy}|}{\sqrt{(d_{L,jy}-d_{L,iy})^2+(d_{L,ix}-d_{L,iy})^2}},
\end{equation}
where `dis' means `the distance of'. Note that the Assumption \ref{A4} ensures that both the numerator and denominator of \eqref{ewx1} are greater than zero. Therefore, $\mathrm{dis}(E_{L,i},\tilde{E}_{L,i})>0$ holds. Let us choose
\begin{equation}\label{aplxpp}
\alpha_p=(1-\mu)\min\left\{\frac{|d_{L,jx}d_{L,iy}-d_{L,ix}d_{L,jy}|}{\sqrt{(d_{L,jy}-d_{L,iy})^2+(d_{L,ix}-d_{L,iy})^2}}\right\},\forall i,j \in\mathcal{R}.
\end{equation}
The constant $\alpha_p$ is selected as the minimum distance between the parallel paired edges $E_{L,i}$ and $\tilde{E}_{L,i},\forall i \in\mathcal{R}$. Hence, the distance between any point in $\tilde{\mathcal{C}}_{Lp}$ and the other one outside $\mathcal{C}_{Lp}$ is greater than $\alpha_p$. Following the same routine above, we can calculate $\alpha_\theta$ as
\begin{equation}\label{apxhthe}
\alpha_\theta=(1-\mu)\min\left\{|d_{L,j\theta}|\right\},\forall j\in\mathcal{R}.
\end{equation}
A graphical version of the above analysis refers to Fig. 2. The choice of $\alpha_p$ and $\alpha_\theta$ is to guarantee that the distance from any point outside the original convex hull to that in the scaled one is greater than the minimum distance between the `edges' before and after scaling. This completes the proof.
\begin{figure}[h]\label{F2}
\centering     
\includegraphics[width=.80\columnwidth]{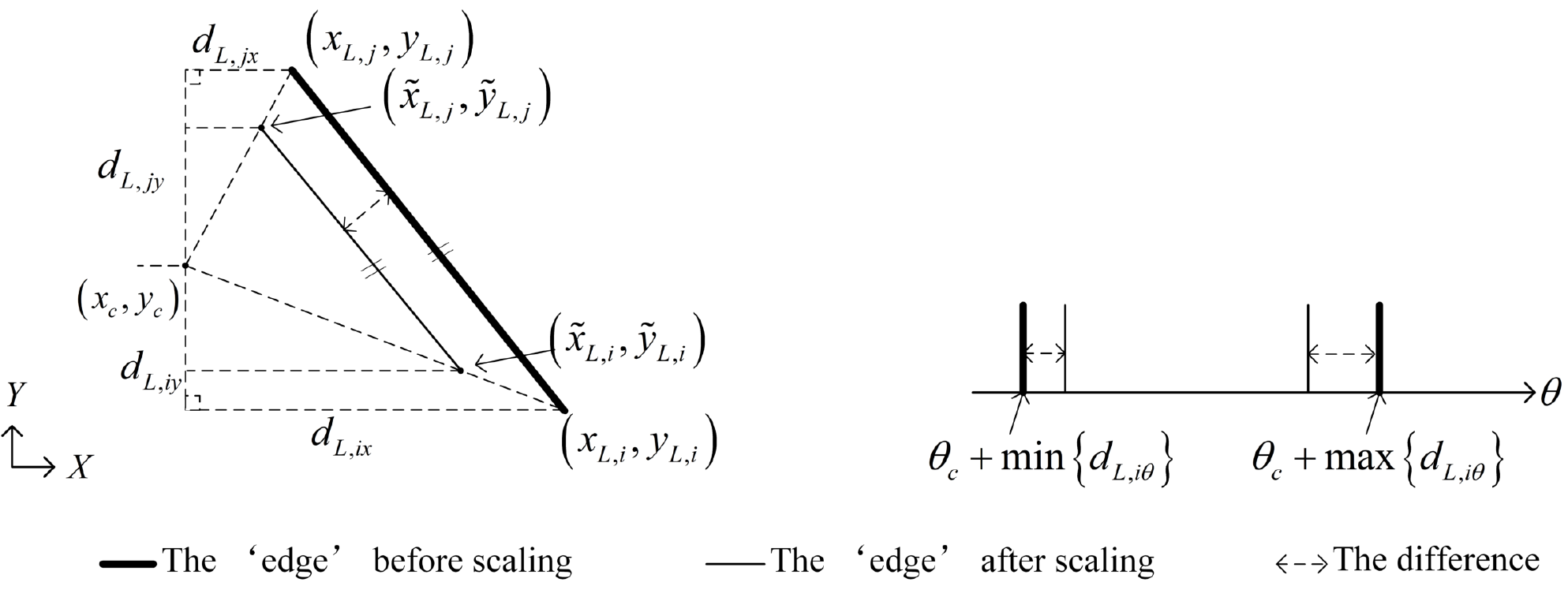}
\caption{Illustration of constructing the convex sub-hulls.}
\end{figure}
\end{proof}

By the expressions of \eqref{aplxpp} and \eqref{apxhthe}, both $\alpha_p$ and $\alpha_\theta$ can be set arbitrarily small by decreasing~$\mu$, so that the scaled convex sub-hulls $\tilde{\mathcal{C}}_{Lp}$ and $\tilde{\mathcal{C}}_{L\theta}$ can approximate the original ones with arbitrarily small difference. Thisimplies that the control design for the vehicle in question ought to ensure that the ultimate bounds of pose tracking errors $p_{F,i}-p_{F,ir}$ and $\theta_{F,i}-\theta_{F,ir}$ be tunable and arbitrarily small.
\bibliographystyle{unsrtnat}
\bibliography{references}

\end{document}